\newtheorem{proposition}{Proposition}
\DeclareMathOperator{\cov}{cov}
\icmltitlerunning{Exact Inference for Gaussian Process Regression in case of Big Data with the Cartesian Product Structure}
\begin{document}

\twocolumn[
\icmltitle{Exact Inference for Gaussian Process Regression in case of Big Data with the Cartesian Product Structure}

\icmlauthor{Belyaev Mikhail}{mikhail.belyaev@datadvance.net}
\icmladdress{Institute for Information Transmission Problems,
             Bolshoy Karetny per. 19, Moscow, 127994, Russia}
\icmladdress{DATADVANCE, llc, Pokrovsky blvd. 3, Moscow, 109028, Russia}
\icmladdress{PreMoLab, MIPT, Institutsky per. 9, Dolgoprudny, 141700, Russia}
\icmlauthor{Burnaev Evgeny}{evgeny.burnaev@datadvance.net}
\icmladdress{Institute for Information Transmission Problems,
             Bolshoy Karetny per. 19, Moscow, 127994, Russia}
\icmladdress{DATADVANCE, llc, Pokrovsky blvd. 3, Moscow, 109028, Russia}
\icmladdress{PreMoLab, MIPT, Institutsky per. 9, Dolgoprudny, 141700, Russia}
\icmlauthor{Kapushev Yermek}{ermek.kapushev@datadvance.net}
\icmladdress{Institute for Information Transmission Problems,
             Bolshoy Karetny per. 19, Moscow, 127994, Russia}
\icmladdress{DATADVANCE, llc, Pokrovsky blvd. 3, Moscow, 109028, Russia}

\icmlkeywords{Gaussian process, large dataset, ICML}

\vskip 0.3in
]

\begin{abstract}
Approximation algorithms are widely used in many engineering problems.
To obtain a data set for approximation a factorial design of experiments
is often used.
In such case the size of the data set can be very large.
Therefore, one of the most popular algorithms for approximation ---
Gaussian Process regression --- can be hardly applied due to its computational complexity.
In this paper a new approach for Gaussian Process regression in case of factorial design
of experiments is proposed.
It allows to efficiently compute exact inference and handle large multidimensional data sets.
The proposed algorithm provides fast and accurate approximation and also handles
anisotropic data.
\end{abstract}

\section{Introduction}
\label{submission}

Gaussian Processes (GP) have become a popular tool for regression which has lots of applications
in engineering problems \cite{rasmussen2006gaussian}.
They combine flexibility of being able to approximate a wide range of smooth functions
with simple structure of Bayesian inference and interpretable hyperparameters.

GP regression algorithm has $\mathcal{O}(N^3)$ time complexity and $\mathcal{O}(N^2)$
memory complexity, where $N$ is a size of the training sample.
For large training sets (ten thousands or more) construction of GP regression becomes intractable
problem on current hardware.

There is significant amount of research concerning sparse approximation of GP regression
reducing run-time complexity to $\mathcal{O}(M^2N)$ for some $M \ll N$ (for example, $M$ can be
the size of a subsample used for sparse approximation).
There are also methods based on Mixture of GPs and Bayesian Machine Committee.
Overview of these methods can be found in
\cite{rasmussen2006gaussian, rasmussen2001infinite, quinonero2005unifying}.

Reduced run-time and memory complexity can be achieved not only by means of
sparse approximations
and Mixtures of GPs
but also by taking into account a structure of a design of experiments.
In engineering problems they often use a factorial design \cite{Montgomery2006DAE}.
That is there are several groups of variables, in each group variables take values from some discrete finite set.
Such sets and corresponding groups of variables are called {\em factors}.
Number of different values in a factor is called {\em factor size} and the values themselves are called {\em levels}.
The Cartesian product of factors forms the training set.
When the factorial design of experiments is used the size of the data set can be very large (as it grows exponentially with dimension of input variables).

There are several methods based on splines which consider this special structure of the given data \cite{stone97polynomialsplines}.
A disadvantage of these methods is that they work only with one-dimensional factors
and can't be applied to a more general case when factors are multidimensional.
Another shortcoming is that such approaches don't have approximation accuracy evaluation procedure.

There are also several approaches for GP regression on a lattice based on block Toeplitz covariance matrix with
Toeplitz blocks and circulant embedding \cite{zimmerman1989, woodChan1994, Dietrich1997}.
Such methods have $\mathcal{O}(N \log N)$ time complexity and $\mathcal{O}(N)$ memory complexity.
They can be used only if all the factors are one-dimensional and after monotonic transformation of each factor
(levels of each factor should be equally spaced).
However, in the case of multidimensional factors the covariance matrix doesn't have the desired structure (it should be block Toeplitz with Toeplitz blocks)
and therefore these approaches can't be generalized for such design of experiments.

There is another problem which we are likely to encounter.
Factor sizes can vary significantly.
Engineers usually use large factors sizes if corresponding input variables have big impact on function values
otherwise the factors sizes are likely to be small,
i.e. the factor sizes are often selected using knowledge from a subject domain \cite{rendall2008aircraftSurface}.
For example, if it is known that dependency on some variable is quadratic
then the size of this factor will be 3 as a larger size is redundant.
Difference between factor sizes can lead to degeneracy of the GP model.
We will refer to this property of data set as {\em anisotropy}.

In this paper we describe an approach that takes into
account factorial nature of the design of experiments in general case of multidimensional factors
and allows to efficiently calculate exact inference of GP regression.
We also discuss how to choose initial values of parameters for the GP model and regularization in order to
take into account possible anisotropy of the training data set.

\subsection{Approximation problem}
Let $f(\mathbf{x})$ be some unknown smooth function.
The task is given a data set $\mathcal{D} = \{(\mathbf{x}_i, y_i), \mathbf{x}_i \in \mathbb{R}^d, y_i \in \mathbb{R}\}_{i = 1}^N$
of $N$ pairs of inputs $\mathbf{x}_i$ and outputs $y_i$ construct an approximation $\hat{f}(\mathbf{x})$
of the function $f(\mathbf{x})$
where outputs $y_i$ are assumed to be noisy with additive i.i.d. Gaussian noise:
\begin{equation}
    \label{eq:model}
      y_i = f(\mathbf{x}_i) + \varepsilon_i, \quad \varepsilon_i \sim \mathcal{N}(0, \sigma_{noise}^2).
\end{equation}

\subsection{Factorial design of experiments}
In this paper a special case is considered when the design of experiments is factorial.
Let us refer to sets of points
${s_k = \{ x_{i_k}^k \in {\rm X}_k \}_{i_k = 1}^{n_k}}$,
${\rm X}_k \subset \mathbb{R}^{d_k}, \, k = \overline{1, K}$ as {\em factors}.
A set of points $\mathbf{S}$ is referred to as a factorial design of experiments
if it is a Cartesian product of factors
\begin{equation}
  \label{eq:factorial_design}
  \begin{split}
    \mathbf{S} = & \, s_1 \times s_2 \times \cdots \times s_k = \\
    & \{ [x_{i_1}^1, \ldots, x_{i_K}^K], \{ i_k = 1, \ldots, n_k\}_{k = 1}^K \}.
  \end{split}
\end{equation}
The elements of $\mathbf{S}$ are vectors of a dimension $d = \sum_{i = 1}^K d_i$ and
the sample size is a product of sizes of all factors $N = \prod_{i = 1}^K n_i$.
If all the factors are one-dimensional $\mathbf{S}$ is a full factorial design.
But in a more general case factors are multidimensional (see example in Figure \ref{fig:multidim_factor}).
Note that in this paper we do not consider categorical variables,
factorial design is implemented across continuous real-valued features.

\begin{figure}
  \includegraphics[width=0.5\textwidth]{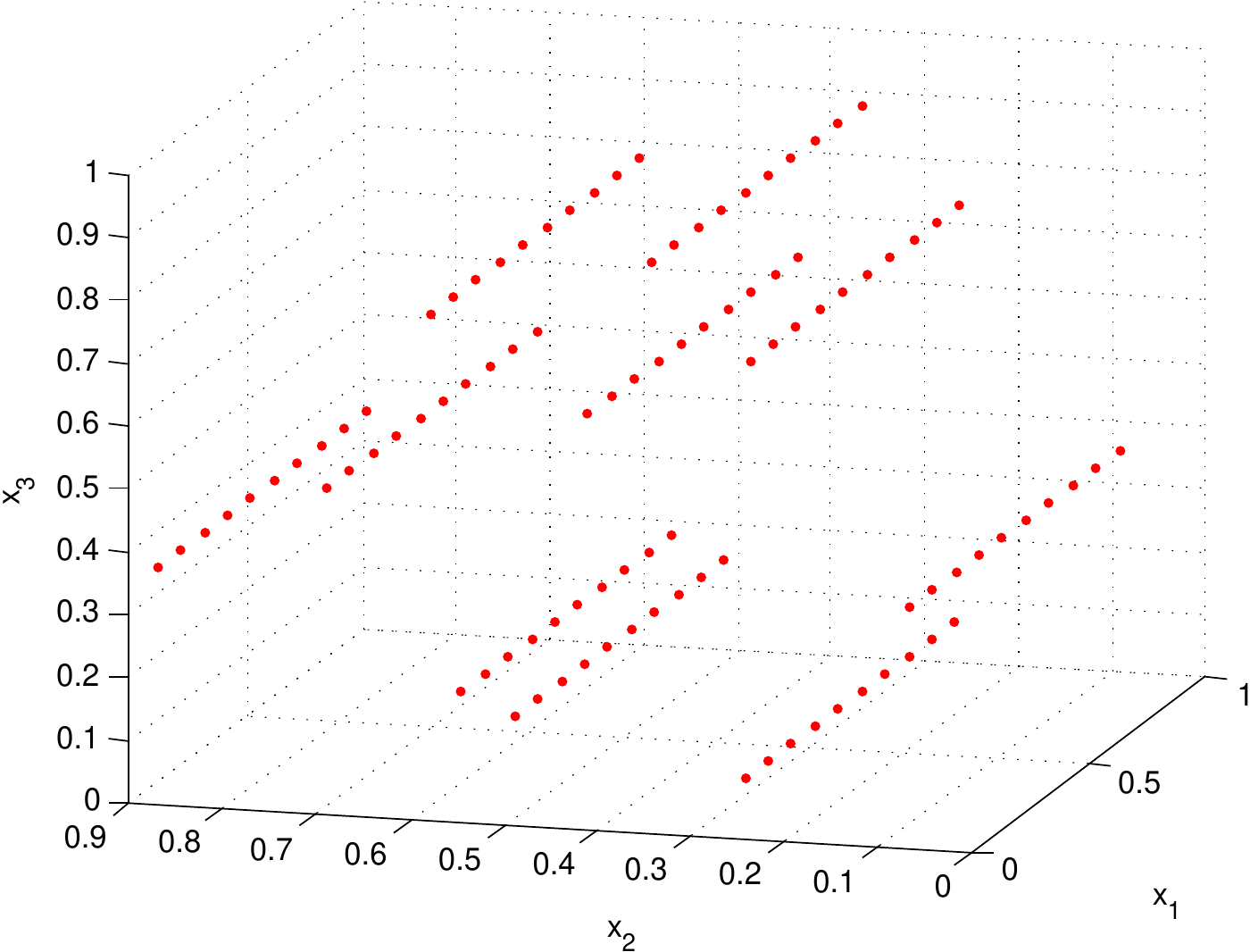}
  \caption{Example of a multidimensional factor. In the figure $x_1$ is a usual one-dimensional factor
  and $(x_2, x_3)$ is a 2-dimensional factor.}
  \label{fig:multidim_factor}
\end{figure}

\subsection{Gaussian Process Regression}
GP regression is a Bayesian approach where a prior distribution over continuous functions
is assumed to be a Gaussian Process, i.e.
\begin{equation}
  \label{eq:gp_model}
  \mathbf{f} \, | \, \mathbf{X} \sim \mathcal{N}(\boldsymbol{\mu}, \, \mathbf{K}_f),
\end{equation}
where $\mathbf{f} = (f(\mathbf{x}_1), f(\mathbf{x}_2), \ldots, f(\mathbf{x}_N))$ is a vector of outputs,
$\mathbf{X} = (\mathbf{x}_1^T, \mathbf{x}_2^T, \ldots, \mathbf{x}_N^T)^T$ is a matrix of inputs,
$\boldsymbol{\mu} = (\mu(\mathbf{x}_1), \mu(\mathbf{x}_2), \ldots, \mu(\mathbf{x}_N))$ is a mean vector for some function $\mu(\mathbf{x})$,
$\mathbf{K}_f = \{ k(\mathbf{x}_i, \mathbf{x}_j) \}_{i, j = 1}^N$ is a covariance matrix for some a priori selected covariance function $k$.


Without loss of generality we make the standard assumption of zero-mean data.
Also assume that we are given observations with Gaussian noise
\[
y_i = f(\mathbf{x}_i) + \varepsilon_i, \quad \varepsilon_i \sim \mathcal{N}(0, \sigma_{noise}^2)
\]
For a prediction of $f(\mathbf{x}_*)$ at new unseen data point $\mathbf{x}_*$ the posterior mean
conditioned on the observations $\mathbf{y} = (y_1, y_2, \ldots, y_N)$ at training inputs $\mathbf{X}$
is used
\begin{equation}
  \label{eq:posterior_mean}
  \hat{f}(\mathbf{x}_*) = \mathbf{k}(\mathbf{x}_*)^T \mathbf{K}_y^{-1}\mathbf{y},
\end{equation}
where $\mathbf{k}(\mathbf{x}_*) = (k(\mathbf{x}_*, \mathbf{x}_1), \ldots, k(\mathbf{x}_*, \mathbf{x}_N))^T$,
$\mathbf{K}_y = \mathbf{K}_f + \sigma_{noise}^2\mathbf{I}$ and
$\mathbf{I}$ is an identity matrix.
For approximation accuracy evaluation the posterior variance is used
\begin{equation}
  \label{eq:posterior_covariance}
  \cov(\hat{f}(\mathbf{x}_*)) = k(\mathbf{x}_*, \mathbf{x}_*) - \mathbf{k}(\mathbf{x}_*)^T \mathbf{K}_y^{-1} \mathbf{k}(\mathbf{x}_*).
\end{equation}
Usually for GP regression a squared exponential covariance function is used
\begin{equation}
  \label{eq:covariance_function}
  k(\mathbf{x}_p, \mathbf{x}_q) = \sigma_f^2 \exp \left ( -\sum_{i = 1}^d \theta_i^2 (\mathbf{x}_p^{(i)} - \mathbf{x}_q^{(i)})^2 \right ).
\end{equation}

Let us denote the vector of hyperparameters $\theta_i$ by $\boldsymbol{\theta} = (\theta_1, \ldots, \theta_d)$.
To choose the hyperparameters of our model we consider the log likelihood
\begin{equation}
  \label{eq:loglikelihood}
  \begin{split}
    \log p(\mathbf{y} \, | \, \mathbf{X}, \boldsymbol{\theta}, \sigma_f, \sigma_{noise}) = & -\frac12 \mathbf{y}^T \mathbf{K}_y^{-1}\mathbf{y} - \frac12 \log |\mathbf{K}_y| - \\
    & - \frac{N}{2} \log 2 \pi
  \end{split}
\end{equation}
and optimize it over the hyperparameters \cite{rasmussen2006gaussian}.
The runtime complexity of learning GP regression is $\mathcal{O}(N^3)$ as we need to calculate
inverse of $\mathbf{K}_y$, its determinant and derivatives of the log likelihood.

\section{Proposed approach}

\subsection{Tensor and related operations}
For further discussions we will use tensor notation, so let's introduce
definition of a tensor and some related operations.

A {\em tensor} $\mathcal{Y}$ is a $K$-dimensional matrix of size $n_1 \times n_2 \times \cdots \times n_K$
\cite{kolda09tensordecompositions}:
\begin{equation}
  \label{eq:tensor}
  \mathcal{Y} = \{ y_{i_1, i_2, \ldots, i_K}, \{i_k = 1, \ldots, n_k \}_{k = 1}^K \}.
\end{equation}

By $\mathcal{Y}^{(j)}$ we will denote a matrix consisting of elements of the tensor $\mathcal{Y}$
whose rows are $1 \times n_j$ slices of $\mathcal{Y}$ with fixed indices
$i_{j + 1}, \ldots, i_K, i_1, \ldots, i_{j - 1}$ and altering index $i_j = 1, \ldots, n_j$.
In case of $2$-dimensional tensor it holds that
${\mathcal{Y}^{(1)} = \mathcal{Y}^T}$ and $\mathcal{Y}^{(2)} = \mathcal{Y}$.

Now let's introduce multiplication of a tensor by a matrix along the direction $i$.
Let $B$ be some matrix of size $n_i \times n_i'$.
Then the product of the tensor $\mathcal{Y}$ and the matrix $B$ along the direction $i$
is a tensor $\mathcal{Z}$ of size $n_1 \times \cdots \times n_{i - 1} \times n_i' \times n_{i + 1} \times \cdots \times n_K$
such that $\mathcal{Z}^{(i)} = \mathcal{Y}^{(i)}B$.
We will denote this operation by $\mathcal{Z} = \mathcal{Y} \otimes_i B$.
For a $2$-dimensional tensor $\mathcal{Y}$ multiplication along the first direction is a left multiplication
by matrix $\mathcal{Y} \otimes_1 B = B^T \mathcal{Y}$,
and along the second direction --- is a right multiplication $\mathcal{Y} \otimes_2 B = \mathcal{Y} B$.

Let's consider an operation {\em vec} which for every multidimensional matrix $\mathcal{Y}$
returns a vector containing all elements of $\mathcal{Y}$.
An inner product of tensors $\mathcal{Y}$ and $\mathcal{Z}$ is the inner product of vectors
${\rm vec}(\mathcal{Y})$ and ${\rm vec}(\mathcal{Z})$
\[
\left < \mathcal{Y}, \mathcal{Z}\right > = \left < {\rm vec}(\mathcal{Y}), {\rm vec}(\mathcal{Z}) \right >.
\]

\subsection{The Kronecker product and its properties}
Some properties of the Kronecker product will be used to compute inference, so let us introduce them in this section.

Let $A$ and $B$ be $m \times n$ and $p \times q$ matrices.
The {\em Kronecker product} of matrices $A$ and $B$, denoted by $\otimes$, is a block matrix
\[
A \otimes B = \begin{bmatrix} a_{11} B & \cdots & a_{1n}B \\ \vdots & \ddots & \vdots \\ a_{m1} B & \cdots & a_{mn} B \end{bmatrix},
\]
where $a_{ij}$ is a $(i, j)$-th element of matrix $A$.
Basic properties of the Kronecker product:
\begin{enumerate}
  \item $A \otimes (B + C) = A \otimes B + A \otimes C$.
  \item $A \otimes (B \otimes C) = (A \otimes B) \otimes C$.
  \item $(A \otimes B)^T = A^T \otimes B^T$.
  \item \label{prop:kronecker_inverse}  $(A \otimes B)^{-1} = A^{-1} \otimes B^{-1}$.
  \item $(A \otimes B)(C \otimes D) = AC \otimes BD$.
\end{enumerate}

The Kronecker product is closely related to the multiplication of a tensor by a matrix.
For every multidimensional matrix $\mathcal{Y}$ of size $n_1 \times n_2 \times \cdots \times n_K$
and $n_i \times p_i$ size matrices $B_i$, $i = 1, \ldots, K$ the following identity holds \cite{kolda09tensordecompositions}
\begin{equation}
  \label{eq:kronecker_product}
  \begin{split}
    (B_1 \otimes B_2 \cdots \otimes & B_K){\rm vec}(\mathcal{Y}) = \\
    & {\rm vec} (\mathcal{Y} \otimes_1 B_1^T \cdots \otimes_K B_K^T),
  \end{split}
\end{equation}
Let's compare a complexity of the right and the left hand sides of (\ref{eq:kronecker_product}).
For simplicity we assume that all the matrices $B_i$ are quadratic of size $n_i \times n_i$ and
$N = \prod n_i$.
Then computation of the left hand side of (\ref{eq:kronecker_product}) requires
$N^2$ operations (of additions and multiplications) not taking into account complexity of the Kronecker product
while the right hand side requires $N \sum_i n_i$ operations.

\subsection{Computing inference}

In this section an efficient way to compute inverse of a covariance matrix will be described
as well as calculation of the log likelihood, its derivatives, the predictive mean and the covariance matrix using introduced tensor operations.

Covariance function (\ref{eq:covariance_function}) can be represented as
a product of covariance functions each depending only on variables from one factor
\begin{equation}
  \label{eq:covariance_function_general}
  k(\mathbf{x}_p, \mathbf{x}_q) = \prod_{i = 1}^K k_i(x_p^i, x_q^i),
\end{equation}
where $x_p^i, x_q^i \in \mathbb{R}^{d_i}$ belong to the same factor $s_i$.
For the squared exponential function we have $k_i(x_p^i, x_q^i) = \sigma_{f, i}^2\exp \left ( -\sum_j^{d_i} \left (\theta_i^{(j)} \right )^2
  \big (x_p^{(j), i} - x_q^{(j), i} \big )^2 \right )$,
where $x_p^{(j), i}$ is a $j$-th component of $x_p^i$.
Note that in general case covariance functions $k_i$ are not necessarily squared exponential, they can be of different types for different factors.
It allows to take into account special features of factors (knowledge from a subject domain) if they are known.
In such case the function defined by (\ref{eq:covariance_function_general}) is still a valid
covariance function being the product of separate covariance functions.
From now on we will denote by $\theta_i = (\theta_i^{(1)}, \ldots, \theta_i^{(d_i)})$ the set of hyperparameters for covariance
function of the $i$-th factor and let $\boldsymbol{\theta} = (\theta_1, \ldots, \theta_K)$.

Such form of the covariance function and the factorial design of experiments
allows us to represent the covariance matrix as the Kronecker product
\begin{equation}
  \label{eq:covariance_kronecker}
  \mathbf{K}_f = \bigotimes_{i = 1}^K\mathbf{K}_i,
\end{equation}
where $\mathbf{K}_i$ is a covariance matrix defined by the $k_i$ covariance function
computed at points from the $i$-th factor $s_i$.

The Kronecker product of matrices can be efficiently inverted due to property \ref{prop:kronecker_inverse}.
To invert the matrix $\mathbf{K}_y = \mathbf{K}_f + \sigma_{noise}^2 \mathbf{I}$
we will use the Singular Value Decomposition (SVD)
\[
\mathbf{K}_i = \mathbf{U}_i \mathbf{D}_i \mathbf{U}_i^T,
\]
where $\mathbf{U}_i$ is an orthogonal matrix of eigenvectors of matrix $\mathbf{K}_i$
and $\mathbf{D}_i$ is a diagonal matrix of eigenvalues.
Using the properties of the Kronecker product and representing an identity matrix as
$\mathbf{I}_{d_i} = \mathbf{U}_i \mathbf{U}_i^T$ we obtain
\begin{equation}
  \label{eq:inverse_covariance}
  \mathbf{K}_y^{-1} = \left (\bigotimes_{i = 1}^K \mathbf{U}_i \right ) \left (
    \left [\bigotimes_{i = 1}^K \mathbf{D}_i \right ] + \sigma_{noise}^2 \mathbf{I} \right )^{-1}
  \left ( \bigotimes_{i = 1}^K \mathbf{U}_i^T \right ).
\end{equation}

Computing SVD for all $\mathbf{K}_i$ requires $\mathcal{O}(\sum_k n_k^3)$ operations.
Calculation of the Kronecker product in (\ref{eq:inverse_covariance}) has complexity $\mathcal{O}(N^2)$.
So, this gives us overall complexity $\mathcal{O}(N^2)$ for calculation of expressions for the log likelihood,
the predictive mean and the covariance.
It is faster than the straightforward calculations, however it can be improved.

Equations (\ref{eq:posterior_mean}), (\ref{eq:posterior_covariance}), (\ref{eq:loglikelihood})
for GP regression do not require explicit inversion of $\mathbf{K}_y$.
In each equation it is multiplied by vector $\mathbf{y}$ (or $\mathbf{k}_*$).
So, we will compute $\mathbf{K}_y^{-1} \mathbf{y}$ instead of explicitly inverting
$\mathbf{K}_y$ and then multiplying it by the vector $\mathbf{y}$.

Let $\mathcal{Y}$ be a tensor containing values of the vector $\mathbf{y}$ such that
${\rm vec}(\mathcal{Y}) = \mathbf{y}$.
Now using identities (\ref{eq:kronecker_product}) and (\ref{eq:inverse_covariance})
we can write $\mathbf{K}_y^{-1}\mathbf{y}$ as
\begin{align}
  \label{eq:fast_inverse_covariance}
    \mathbf{K}_y^{-1}\mathbf{y} = \left ( \bigotimes_{i = 1}^K \mathbf{U}_i \right )
    \left ( \left [\bigotimes_{i = 1}^K \mathbf{D}_i \right ] + \sigma_{noise}^2 \mathbf{I} \right )^{-1} \times \nonumber \\
    \times {\rm vec}(\mathcal{Y} \otimes_1 \mathbf{U}_1 \cdots \otimes_K \mathbf{U}_K) = \nonumber \\
    = {\rm vec}\left [ \left ((\mathcal{Y} \otimes_1 \mathbf{U}_1 \cdots \otimes_K \mathbf{U}_K) * \mathcal{D}^{-1} \right ) \right .\otimes_1 \nonumber \\
    \left . \otimes_1 \mathbf{U}_1^T \cdots \otimes_K \mathbf{U}_K^T \right ],
\end{align}
where $\mathcal{D}$ is a tensor constructed by transforming the diagonal of matrix
$\big [\bigotimes_k \mathbf{D}_k \big ] + \sigma_{noise}^2 \mathbf{I}$ into a tensor.

The elements of the tensor $\mathcal{D}$ are eigenvalues of the matrix $\mathbf{K}_y$,
therefore its determinant can be calculated as
\begin{equation}
  \label{eq:fast_determinant}
  |\mathbf{K}_y| = \prod_{i_1, \ldots, i_K} \mathcal{D}_{i_1, \ldots, i_K}.
\end{equation}

\begin{proposition}
  The computational complexity of the log likelihood (\ref{eq:loglikelihood}), where $\mathbf{K}_y^{-1}y$ and
  $|\mathbf{K}_y|$ are calculated using (\ref{eq:fast_inverse_covariance}) and (\ref{eq:fast_determinant}), is
  \begin{equation}
    \mathcal{O} \left (\sum_{i = 1}^K n_i^3 + N\sum_{i = 1}^K n_i \right ).
  \end{equation}
\end{proposition}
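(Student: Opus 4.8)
The plan is to split the log likelihood (\ref{eq:loglikelihood}) into its three summands and bound the cost of each separately, treating the eigendecomposition of the factor covariances as a shared first stage. The constant term $-\frac{N}{2}\log 2\pi$ is free, so the analysis reduces to the quadratic form $\mathbf{y}^T\mathbf{K}_y^{-1}\mathbf{y}$ and the log determinant $\log|\mathbf{K}_y|$. Before either can be evaluated one diagonalizes each factor covariance $\mathbf{K}_i=\mathbf{U}_i\mathbf{D}_i\mathbf{U}_i^T$; as already noted, all $K$ of these decompositions together cost $\mathcal{O}(\sum_{i=1}^K n_i^3)$, which supplies the first term of the claimed bound.

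For the quadratic form I would compute the vector $\mathbf{K}_y^{-1}\mathbf{y}$ via (\ref{eq:fast_inverse_covariance}) and then take its inner product with $\mathbf{y}$. Reading (\ref{eq:fast_inverse_covariance}) left to right, the work consists of (i) the forward sweep of tensor-by-matrix products $\mathcal{Y}\otimes_1\mathbf{U}_1\cdots\otimes_K\mathbf{U}_K$, (ii) an elementwise product with $\mathcal{D}^{-1}$, and (iii) the backward sweep $\otimes_1\mathbf{U}_1^T\cdots\otimes_K\mathbf{U}_K^T$. The crux is costing a single multiplication $\mathcal{Z}\otimes_i\mathbf{U}_i$: since each $\mathbf{U}_i$ is $n_i\times n_i$, the intermediate tensor retains all $N$ entries, and by definition $\mathcal{Z}\otimes_i\mathbf{U}_i$ is the matrix product $\mathcal{Z}^{(i)}\mathbf{U}_i$ of an $(N/n_i)\times n_i$ matrix by an $n_i\times n_i$ matrix, i.e. $\mathcal{O}(N n_i)$ operations. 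Summing over the two sweeps gives $\mathcal{O}(N\sum_{i=1}^K n_i)$, exactly the per-application bound behind the comparison following (\ref{eq:kronecker_product}); the elementwise step (ii) and the final inner product each cost $\mathcal{O}(N)$ and are absorbed.

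For the log determinant I would use (\ref{eq:fast_determinant}) in the form $\log|\mathbf{K}_y|=\sum_{i_1,\ldots,i_K}\log\mathcal{D}_{i_1,\ldots,i_K}$. Building the tensor $\mathcal{D}$ amounts to forming the $N$ diagonal entries of $[\bigotimes_k\mathbf{D}_k]+\sigma_{noise}^2\mathbf{I}$, each a product of $K$ precomputed eigenvalues, which is $\mathcal{O}(NK)$; summing the $N$ logarithms is $\mathcal{O}(N)$. Since every $n_i\ge 1$ we have $K\le\sum_i n_i$, so $\mathcal{O}(NK)$ is dominated by $\mathcal{O}(N\sum_i n_i)$. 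Collecting the three contributions yields $\mathcal{O}(\sum_{i=1}^K n_i^3)+\mathcal{O}(N\sum_{i=1}^K n_i)$, which is the stated complexity.

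I expect the only delicate point to be the costing of the two tensor sweeps: one must check that applying the square matrices $\mathbf{U}_i$ never inflates the number of stored tensor entries beyond $N$, so that the $i$-th application really costs $N n_i$ and the sweeps add up to $N\sum_i n_i$ rather than compounding multiplicatively. Everything else—the constant term, the elementwise product, the inner product, and the determinant—is either free or $\mathcal{O}(N)$ and is therefore subsumed by the two dominant terms.
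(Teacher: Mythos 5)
Your proposal is correct and follows essentially the same route as the paper's proof: charge $\mathcal{O}(\sum_i n_i^3)$ for the factor-wise SVDs, $\mathcal{O}(N\sum_i n_i)$ for the two tensor-by-matrix sweeps in (\ref{eq:fast_inverse_covariance}), and $\mathcal{O}(N)$-type costs for the elementwise product, the determinant (\ref{eq:fast_determinant}), and the remaining inner product. Your explicit justification that each step $\mathcal{Z}\otimes_i\mathbf{U}_i$ is an $(N/n_i)\times n_i$ by $n_i\times n_i$ matrix product costing $\mathcal{O}(Nn_i)$, and the $\mathcal{O}(NK)$ accounting for building $\mathcal{D}$, are details the paper asserts without elaboration, but the argument is the same.
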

\begin{proof}
  Let's calculate the complexity of computing $\mathbf{K}_y^{-1} \mathbf{y}$ using (\ref{eq:fast_inverse_covariance}).
  Computation of the matrices $\mathbf{U}_i$ and $\mathbf{D}_i$ requires $\mathcal{O}(\sum_i n_i^3)$ operations.
  Multiplication of the tensor $\mathcal{Y}$ by the matrices $\mathbf{U}_i$ requires $\mathcal{O}(N\sum_i n_i)$ operations.
  Further, component-wise product of the obtained tensor and the tensor $\mathcal{D}^{-1}$ requires $\mathcal{O}(N)$
  operations.
  And complexity of multiplication of the result by the matrices $\mathbf{U}_i$ is again $\mathcal{O}(N\sum_i n_i)$.
  The determinant, calculated by equation (\ref{eq:fast_determinant}), requires $\mathcal{O}(N)$ operations.
  Thus, the overall complexity of computing (\ref{eq:fast_inverse_covariance}) is
  ${\mathcal{O}(\sum_{i = 1}^K n_i^3 + N\sum_{i = 1}^K n_i)}$.
\end{proof}

For more illustrative estimate of the computational complexity
suppose that $n_i \ll N$ (number of factors is large and their sizes are close).
In this case it holds that
$\mathcal{O}(N\sum_i n_i) = \mathcal{O}(N^{1 + \frac{1}{K}})$ and this is
much less than $\mathcal{O}(N^3)$.

To optimize the log likelihood over the hyperparameters we use a gradient based method.
The derivatives of the log likelihood with respect to the hyperparameters take the form
\begin{equation}
  \label{eq:loglikelihood_derivative}
  \begin{split}
    \frac{\partial}{\partial \theta} \left (\log \vphantom{p(\mathbf{y} | \mathbf{X}, \sigma_f, \sigma_{noise})} \right . &
    \left . p(\mathbf{y} | \mathbf{X}, \sigma_f, \sigma_{noise}) \right ) = \\
    & -\frac12 {\rm Tr}(\mathbf{K}_y^{-1} \mathbf{K}') + \frac12\mathbf{y}^T \mathbf{K}_y^{-1} \mathbf{K}' \mathbf{K}_y^{-1} \mathbf{y},
  \end{split}
\end{equation}
where $\theta$ is one of the hyperparameters of covariance function (component of $\theta_i$, $\sigma_{noise}$ or $\sigma_{f, i}, i = 1, \ldots, d$) and
$\mathbf{K}' = \dfrac{\partial \mathbf{K}}{\partial \theta}$.
$\mathbf{K}'$ is also the Kronecker product
\[
\mathbf{K}' = \mathbf{K}_1 \otimes \cdots \otimes \mathbf{K}_{i - 1} \otimes \frac{\partial \mathbf{K}_i}{\partial \theta}
\otimes \mathbf{K}_{i + 1} \otimes \cdots \otimes \mathbf{K}_K,
\]
where $\theta$ is a parameter of the $i$-th covariance function.
Denoting by $\mathcal{A}$ a tensor such that ${\rm vec}(\mathcal{A}) = \mathbf{K}_y^{-1}\mathbf{y}$
the second term in equation (\ref{eq:loglikelihood_derivative}) can be efficiently computed
using the same technique as in (\ref{eq:fast_inverse_covariance}):
\begin{equation}
  \label{eq:fast_derivative_second_term}
  \begin{split}
    \frac12\mathbf{y}^T & \mathbf{K}_y^{-1} \mathbf{K}'\mathbf{K}_y^{-1} \mathbf{y} = \\
    & \left < \mathcal{A}, \mathcal{A} \otimes_1 \mathbf{K}_1^T \otimes_2 \cdots \otimes_{i - 1} \mathbf{K}_{i - 1}^T \otimes_i
      \vphantom{\frac{\partial \mathbf{K}_i^T}{\partial \theta}} \right .\\
    & \quad \left . \frac{\partial \mathbf{K}_i^T}{\partial \theta} \otimes_{i + 1} \mathbf{K}_{i + 1}^T \otimes_{i + 2} \cdots \otimes_K \mathbf{K}_K^T \right >.
  \end{split}
\end{equation}
The complexity of calculating this term of derivative is the same as the complexity of
equation (\ref{eq:fast_inverse_covariance}).

Now let's compute the first term
\begin{equation}
  \begin{split}
    {\rm Tr}(\mathbf{K}_y^{-1} \mathbf{K}') = &\; {\rm Tr} \left (
      \left (\bigotimes_{i = 1}^K \mathbf{U}_i \right ) \mathbf{D}^{-1} \left (\bigotimes_{i = 1}^K \mathbf{U}_i^T \right ) \mathbf{K}'
    \right)= \\
    = &\; {\rm Tr} \left (\mathbf{D}^{-1} \left (\bigotimes_{i = 1}^K \mathbf{U}_i^T \mathbf{K}_i' \mathbf{U}_i \right ) \right ) = \\
    = &\; \left < {\rm diag} \left (\mathbf{D}^{-1} \right ), {\rm diag} \left (\bigotimes_{i = 1}^K \mathbf{U}_i \mathbf{K}_i' \mathbf{U}_i \right ) \right > = \\
    = &\; \left < {\rm diag} \left (\mathbf{D}^{-1} \right ),  \bigotimes_{i = 1}^K {\rm diag} \left ( \mathbf{U}_i \mathbf{K}_i' \mathbf{U}_i \right ) \right >,
  \end{split}
\end{equation}
where ${\rm diag}\big(A \big)$ is a vector of diagonal elements of a matrix $A$,
$\; \mathbf{D} = \bigotimes_i \mathbf{D}_i + \sigma_{noise}^2 \mathbf{I}$.

The computational complexity of this derivative term is the same as the computational
complexity of equation (\ref{eq:fast_derivative_second_term}).

Thus, we obtain
\begin{proposition}
  The computational complexity of calculating derivatives of the log likelihood is
  ${\mathcal{O}\left (\sum\limits_{i = 1}^K n_i^3 + N \sum\limits_{i = 1}^K n_i \right )}$.
\end{proposition}

\begin{table}[t]
  \caption{Runtime (in seconds) of tensored GP and original GP algorithms.}
  \label{tb:runtime}
  \vskip 0.15in
  \begin{center}
    \begin{small}
      \begin{sc}
        \begin{tabular}{rrr}
          \hline
          \abovespace\belowspace
          & original GP   & tensored GP \\
          64     & 0.8    & 0.16 \\
          160    & 2.69   & 0.16 \\
          432    & 14.31  & 0.74 \\
          1000   & 120.38 & 1.02 \\
          2000   & 970.21 & 1.11 \\
          10240  & ---    & 33.18 \\
          64000  & ---    & 74.9 \\
          160000 & ---    & 175.15 \\
          400000 & ---    & 480.14\\
          \hline
          \abovespace\belowspace
        \end{tabular}
      \end{sc}
    \end{small}
  \end{center}
  \vskip -0.1in
\end{table}

Table \ref{tb:runtime} contains training times for
original GP and proposed GP regression for different sample sizes.
The experiments were conducted on a PC with Intel i7 2.8 GHz processor and
4 GB RAM.
For original GP we used GPML Matlab code \cite{gpmltoolbox}.
We also adopted GPML code to use tensor operations.
The results illustrate that the proposed approach is much more faster than the original GP
and allows to make approximations using extremely large data sets.

\subsection{Anisotropy}
In this section we will consider an anisotropy problem.
As it was mentioned in an engineering practice factorial designs are
often anisotropic, i.e. sizes of factors differ significantly.
It is a common case for the GP regression to become degenerate in such situation.
Suppose that the given design of experiments consists of two one-dimensional factors
with sizes $n_1$ and $n_2$.
Assume that $n_1 \ll n_2$.
Then one could expect the length-scale for the first factor to be much greater than
the length-scale for the second factor (or $\theta_1 \ll \theta_2$).
However, in practice we often observe the opposite $\theta_1 \gg \theta_2$.
This happens because the optimization algorithm stacks in a local maximum during maximization over
the hyperparameters as the objective function (the log likelihood) is non-convex with lots of local maxima.
We get an undesired effect of degeneracy:
in the region without training points the approximation is constant
and it has sharp peaks at training points.
This situation is illustrated in Figure \ref{fig:anisotropy_degeneracy}
(compare with the true function in Figure \ref{fig:anisotropy_degeneracy_true}).

Let us denote length-scales as $l_k^{(i)} = \big [\theta_k^{(i)} \big ]^{-1}$.
To incorporate our prior knowledge about factor sizes into regression model
we introduce prior distribution on the hyperparameters~$\boldsymbol{\theta}$:
\begin{equation}
  \label{eq:prior}
  \frac{\theta_k^{(i)} - a_k^{(i)}}{b_k^{(i)} - a_k^{(i)}} \sim \mathcal{B}e(\alpha, \beta), \, \{ i = 1, \ldots, d_k\}_{k = 1}^K,
\end{equation}
i.e. prior on hyperparameter $\theta_k^{(i)}$ is a beta distribution with parameters $\alpha$
and $\beta$ scaled to some interval $\left [a_k^{(i)}, b_k^{(i)} \right ]$.

The log likelihood then has the form
\begin{equation}
  \label{eq:loglikelihood_prior}
  \begin{split}
    \log p(\mathbf{y} \, | \, \mathbf{X}, \boldsymbol{\theta}, \sigma_f, & \sigma_{noise}) = -\frac12 \mathbf{y}^T \mathbf{K}_y^{-1}\mathbf{y} - \frac12 \log |\mathbf{K}_y| - \\
    - \frac{N}{2} \log 2 \pi + & \sum_{k, i} \left ( (\alpha - 1) \log \left (\frac{\theta_k^{(i)} - a_k^{(i)}}{b_k^{(i)} - a_k^{(i)}} \right ) + \right . \\
    + (\beta - 1) \log \left (1 -  \vphantom{\frac{\theta_k^{(i)} - a_k^{(i)}}{b_k^{(i)} - a_k^{(i)}}} \right . &  \left . \left .\frac{\theta_k^{(i)} - a_k^{(i)}}{b_k^{(i)} - a_k^{(i)}} \right )  \right ) - d \log({\rm B}(\alpha, \beta)),
\end{split}
\end{equation}
where ${\rm B}(\alpha, \beta)$ is a beta function.


Numerous references use gamma distribution as a prior, e.g. \cite{neal1997}.
Preliminary experiments showed that GP regression models with gamma prior often degenerates.
So, in this work we use prior with compact support.
By introducing such prior we restrict parameters $\theta_k^{(i)}$ to belong
to some interval $\left [a_k^{(i)}, b_k^{(i)} \right ]$ (or length-scales $l_k^{(i)}$ to belong to the interval
$\left [\big (b_k^{(i)} \big)^{-1}, \big (a_k^{(i)} \big )^{-1} \right ]$).
This choice of prior allows to prohibit too small and too large length-scales excluding possibility to degenerate
(if intervals of allowed length-scales are chosen properly).

It seems reasonable that for an approximation to fit the training points
the length-scale is not needed to be much less than the distance between points.
That's why we choose the lower bound for the length-scale $l_k^{(i)}$ to be $c_k * \min\limits_{x, y \in s_k, x^{(i)} \ne y^{(i)}} ||x^{(i)} - y^{(i)}||$
and the upper bound for the length-scale to be ${C_k * \max\limits_{x, y \in s_k} ||x^{(i)} - y^{(i)}||}$.
The value $c_k$ should be close to 1.
If it is chosen too small we are taking risks to overfit the data by allowing small length-scales.
If $c_k$ is too large we are going to underfit the data by allowing only large length-scales and forbidding small ones.
Constants $C_k$ must be much greater than $c_k$ to permit large length-scales and preserve flexibility.
In this work we used $c_k = 0.5$ and $C_k = 100$.
Such values of $c_k$ and $C_k$ worked rather good in our test cases.

Parameters of beta distribution was set to $\alpha = \beta = 2$ to get
symmetrical probability distribution function (see Figure \ref{fig:beta}).
We don't know a priori large or small should be the values of GP parameters,
so prior distribution should impose nearly the same penalties for the intermediate values of the parameters.
As it can be seen from Figure \ref{fig:beta} the chosen distribution does exactly what is needed.

Figure \ref{fig:nondegenerate_tensorGP} illustrates approximation
of the GP regression with introduced prior distribution (and initialization described in Section \ref{sec:initialization}).
The hyperparameters were chosen such that the approximation is non-degenerate.

\begin{figure}
  \centering
  \includegraphics[width=0.3\textwidth]{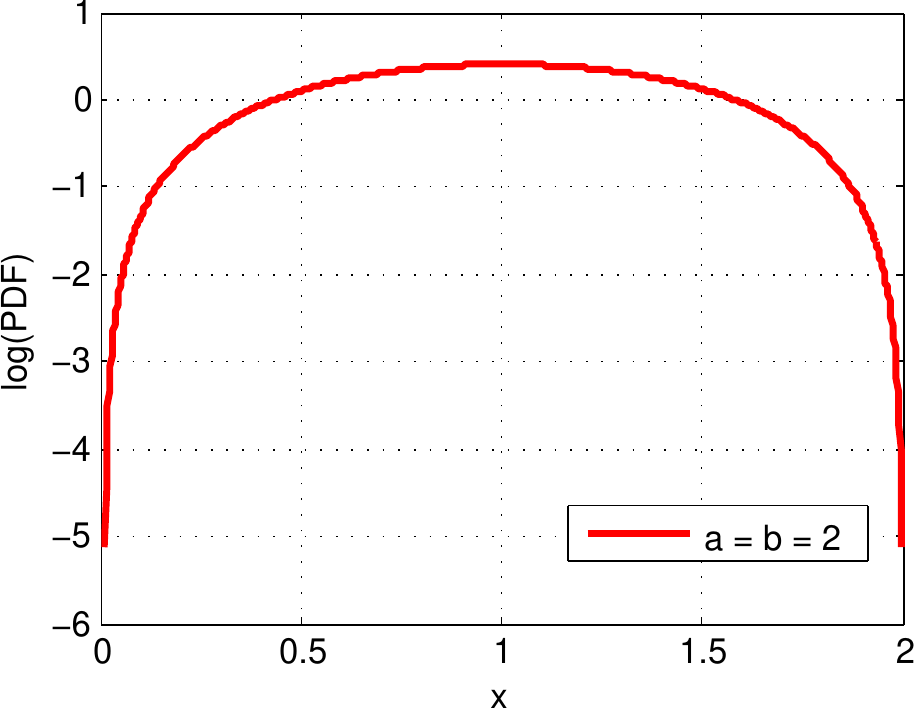}
  \caption{Logarithm of Beta distribution probability density function, rescaled to $[0.01, 2]$ interval,
    with parameters $\alpha = \beta = 2$.}
  \label{fig:beta}
\end{figure}

\subsection{Initialization}
\label{sec:initialization}
It is also important to choose reasonable initial values of hyperparameters
in order to converge to a good solution during parameters optimization.
The kernel-widths for different factors should have different scales because
corresponding factor sizes have different number of levels.
So it seems reasonable to use average distance between points in a factor as an initial value
\begin{equation}
  \label{eq:initialization}
  \theta_k^{(i)} = \left [ \frac{1}{n_k} \left ( \max\limits_{x \in s_k}(x^{(i)}) - \min\limits_{x \in s_k}(x^{(i)}) \right ) \right ]^{-1}.
\end{equation}

\begin{figure}
  \centering
  \includegraphics[width=0.4\textwidth]{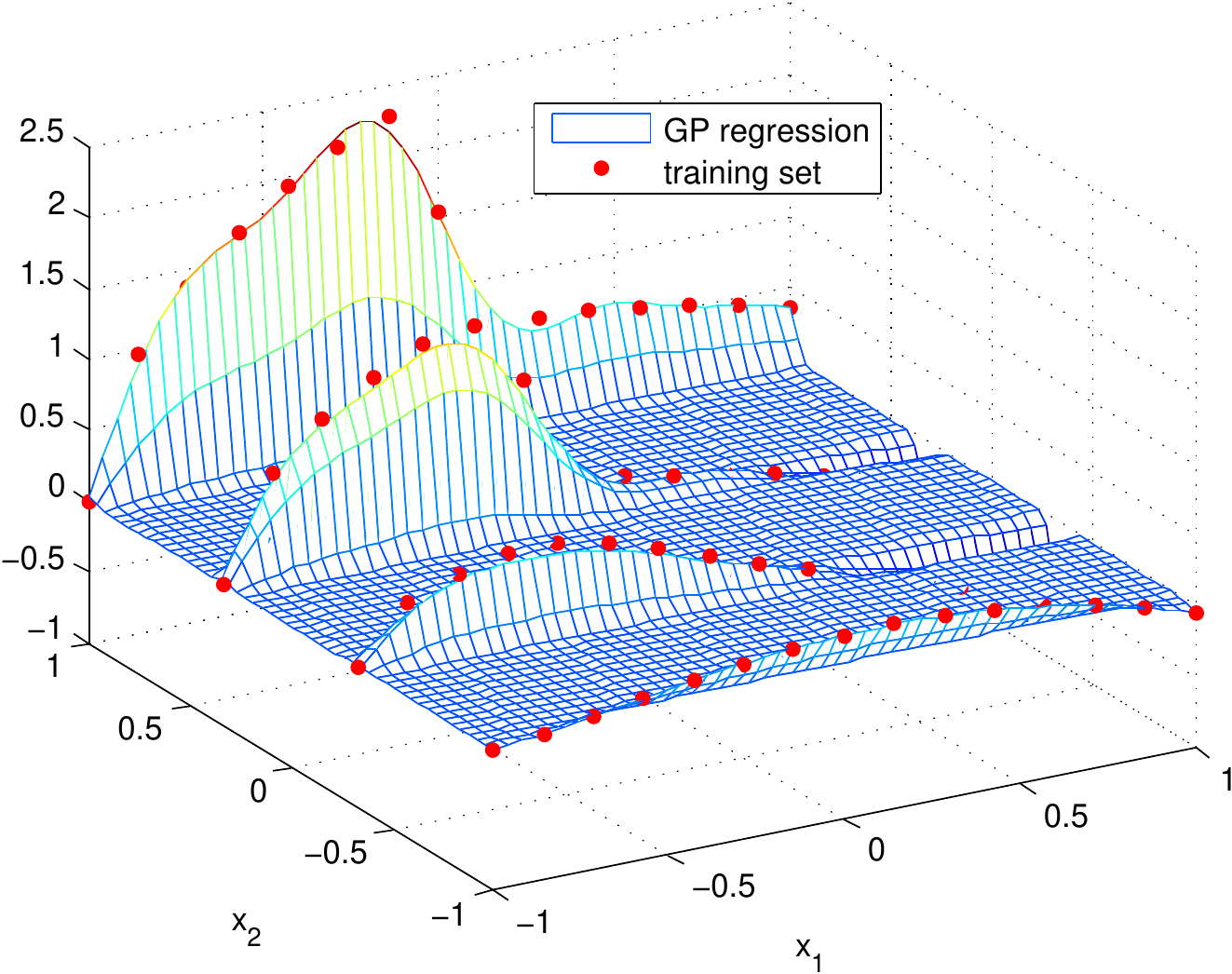}
  \caption{Degeneracy of the GP model in case of anisotropic data set.
    The length-scales for this model are $l_1 = 0.286, l_2 = 0.033$
    whereas factor sizes are $n_1 = 15, n_2 = 4$.}
  \label{fig:anisotropy_degeneracy}
\end{figure}
\begin{figure}
  \centering
  \includegraphics[width=0.4\textwidth]{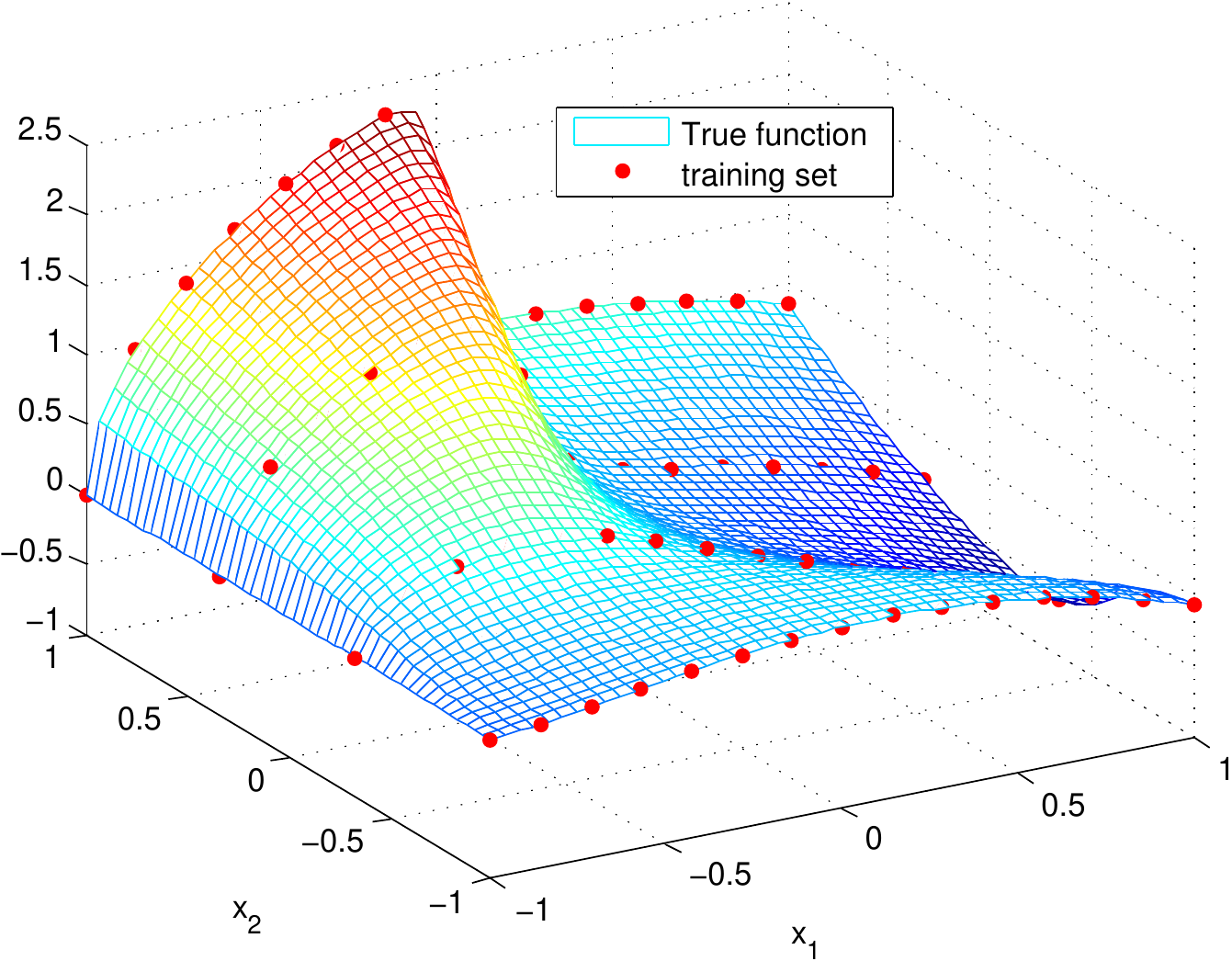}
  \caption{True function.}
  \label{fig:anisotropy_degeneracy_true}
\end{figure}
\begin{figure}
  \centering
  \includegraphics[width=0.4\textwidth]{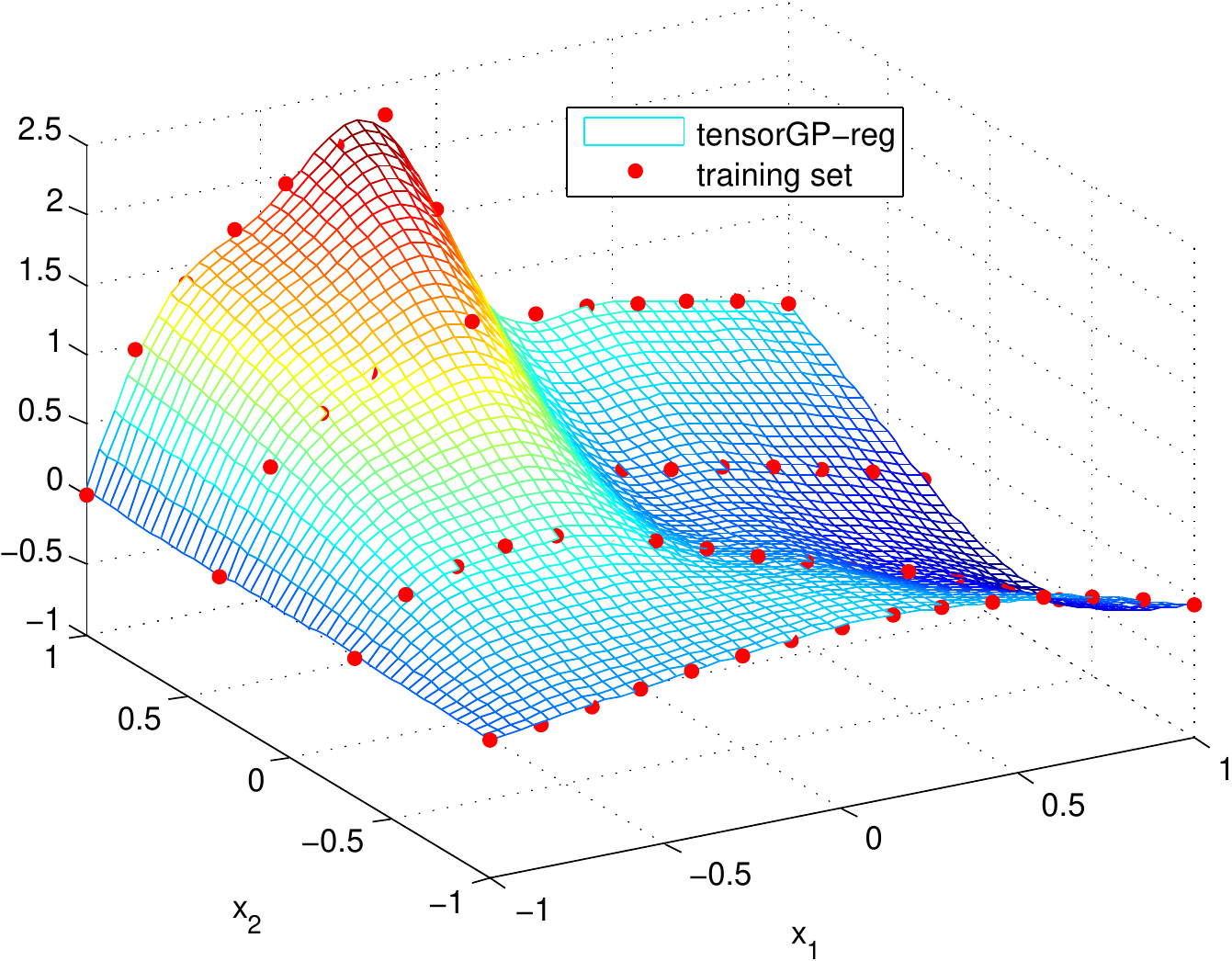}
  \caption{The GP regression with proposed prior distribution and initialization.}
  \label{fig:nondegenerate_tensorGP}
\end{figure}

\section{Experimental results}
Proposed algorithm was tested on a set of test functions \cite{lappeenranta, ETH}.
The functions have different input dimensions from 2 to 6 and the sample sizes $N$
varied from $100$ to about $200000$.
For each function several factorial anisotropic training sets were generated.
We will test the following algorithms: GP with tensor computations (tensorGP),
GP with tensor computations and prior distribution (tensorGP-reg),
the sparse pseudo-point input GP (FITC) \cite{snelson06sparsegaussian} and
Multivariate Adaptive Regression Splines (MARS) \cite{friedman1991multivariate}.
For FITC method we used GPML Matlab code \cite{gpmltoolbox}.
Number of inducing points of FITC algorithm varied from $M = 500$ for small samples (up to $5000$ points)
to $M = 70$ for large samples (about $10^5$ points) in order to obtain approximation in reasonable time
(complexity of FITC algorithm is $\mathcal{O}(M^2N)$).
For tensorGP and tensorGP-reg we adopted GPML code to use tensor operations, proposed prior distribution
and initialization.

To assess quality of approximation a mean squared error was used
\begin{equation}
  \label{eq:mse}
  {\rm MSE} = \frac{1}{N_{test}} \sum_{i = 1}^{N_{test}} (\hat{f}(\mathbf{x}_i) - f(\mathbf{x}_i))^2,
\end{equation}
where $N_{test} = 50000$ is a size of test set.
The test sets were generated randomly.

A large set of problems (number of problems is about 40) was used to test the algorithms.
Usual ``test problem vs. MSE'' plot will be confusing for large set of problems as it will look like
noisy graph.
To see a picture of the overall performance of algorithms on such test problems set
we use Dolan-Mor\'{e} curves \cite{dolanMore}.
The idea of Dolan-Mor\'{e} curves is as follows.
Let $t_{p, a}$ be an error of an $a$-th algorithm on a $p$-th problem and $r_{p, a}$
be a performance ratio
\[
r_{p, a} = \frac{t_{p, a}} {\min\limits_s(t_{p, s})}.
\]
Then Dolan-Mor\'{e} curve is a graph of $\rho_a(\tau)$ function where
\[
\rho_a(\tau) = \frac{1}{n_p}{\rm size} \{p: r_{p, a} \le \tau \},
\]
which can be thought of as a probability for the  $a$-th algorithm to have performance
ratio within factor $\tau \in \mathbb{R}_+$.
The higher the curve $\rho_a(\tau)$ is located the better works the corresponding algorithm.
$\rho_a(1)$ is a number of problems on which the $a$-th algorithm showed the best performance.

As expected tensorGP performs better than FITC as it uses all the information
containing in the training sample.
Introduced prior distribution is more suited for anisotropic data and this allows
tensorGP-reg algorithm to outperform other tested algorithms (see Figure \ref{fig:dolan_more}).

To compare run-time performances of the algorithms we plotted Dolan-Mor{'e} curves
where instead of approximation error the training time was used (see Figure \ref{fig:dolan_more_time}).

Time performance of tensorGP and tensorGP-reg is comparable to that of MARS algorithm
and outperform FITC technique.

\begin{figure}
  \centering
  \includegraphics[width=0.5\textwidth]{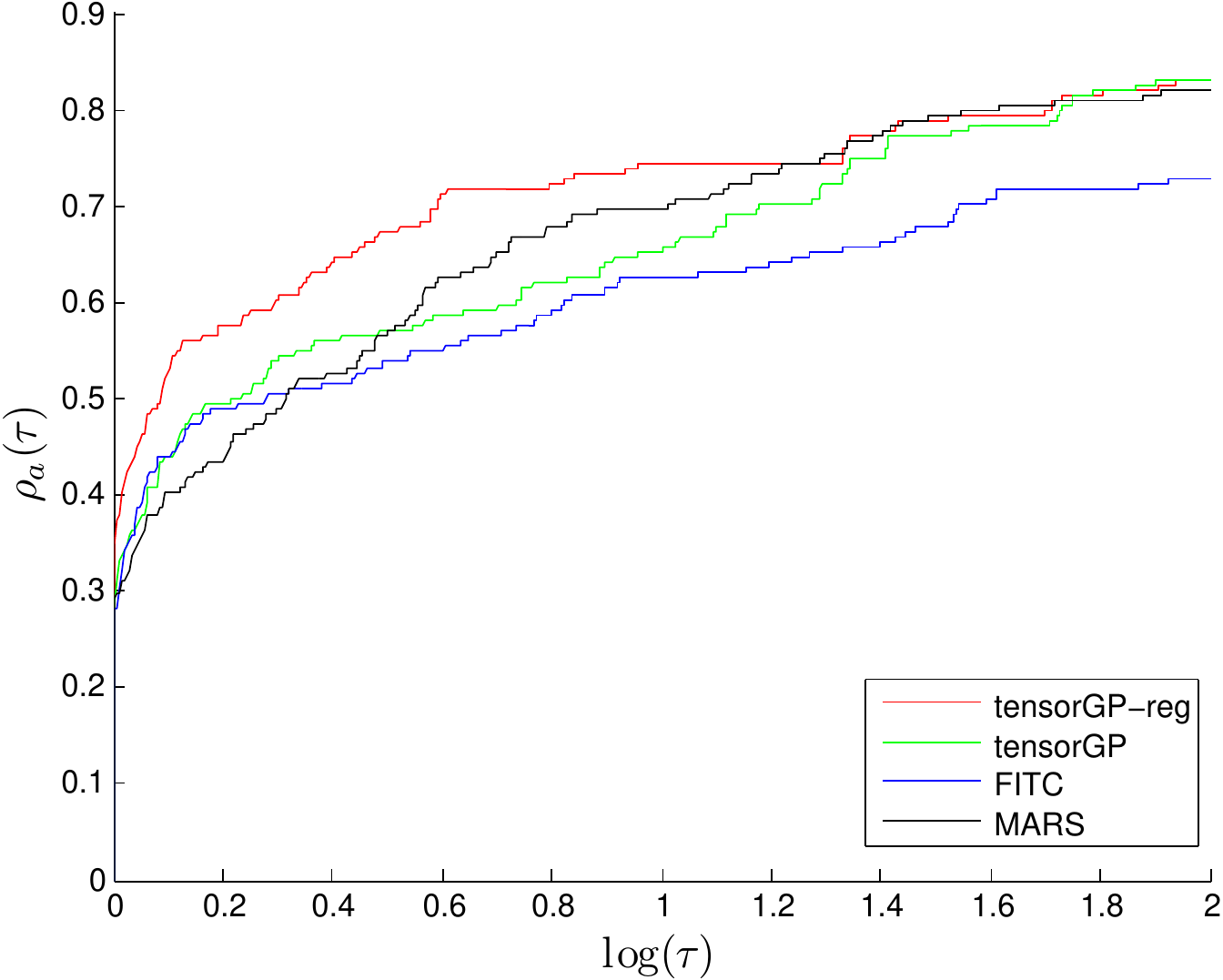}
  \caption{Approximations accuracies comparison.
    Dolan-Mor\'e curves for tensorGP, tensorGP-reg, MARS and FITC algorithms
    in logarithmic scale.
    The higher lies the curve the better performs the corresponding algorithm.}
  \label{fig:dolan_more}
\end{figure}

\begin{figure}
  \centering
  \includegraphics[width=0.5\textwidth]{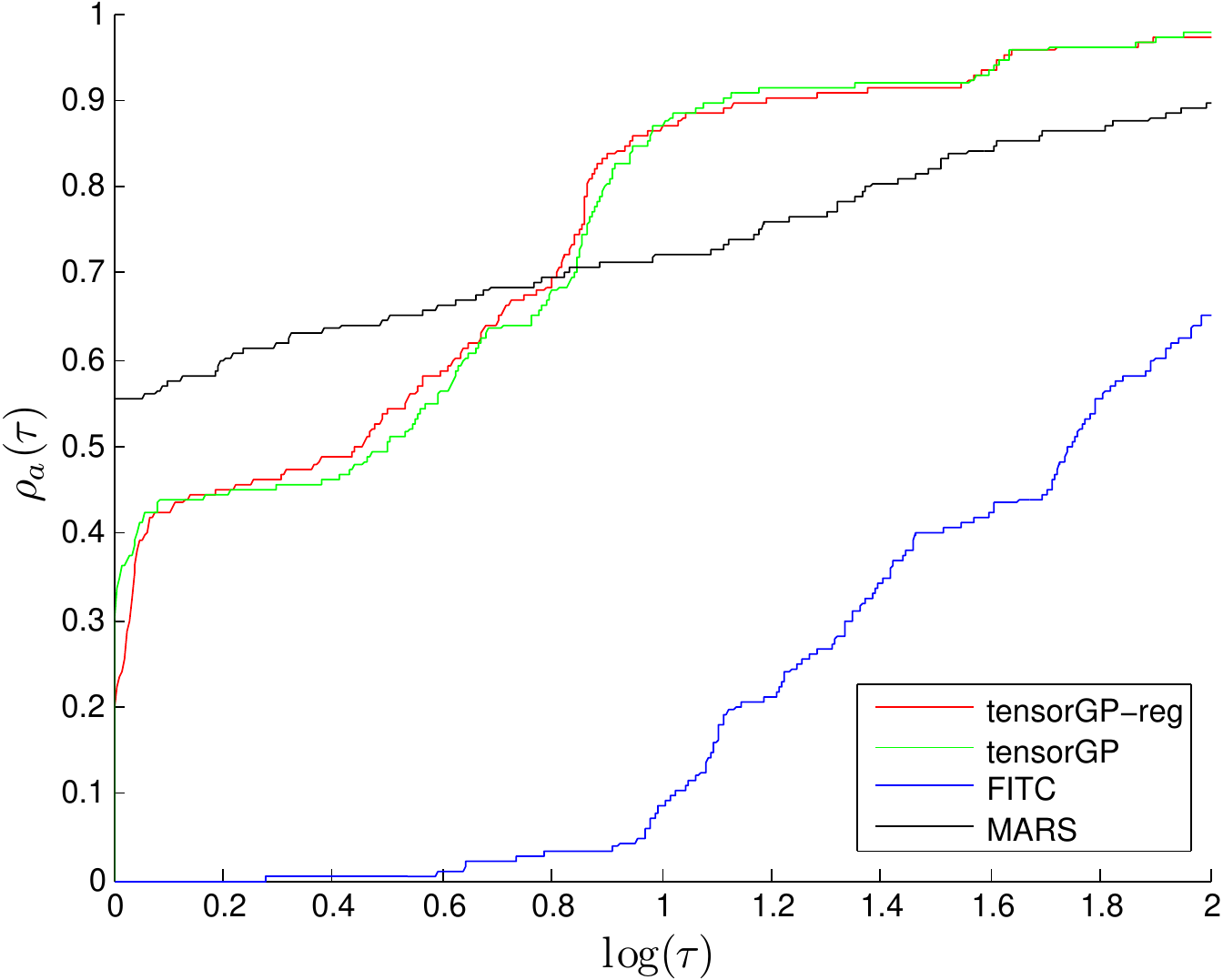}
  \caption{Run-times comparison.
    Dolan-Mor\'e curves for tensorGP, tensorGP-reg, MARS and FITC algorithms
    in logarithmic scale.
    The higher lies the curve the better performs the corresponding algorithm.}
  \label{fig:dolan_more_time}
\end{figure}

\subsection{Rotating disc problem}

In this section we will consider a real world problem of rotating disc shape design.
Such kind of problems often arises during aircraft engine design and in turbomachinery \cite{armand1995structural}.

In this problem a disc of an impeller is considered.
It is rotated around the shaft.
The geometrical shape of the disc considered here is parameterized by 6 variables
$\mathbf{x} = (h_1, h_2, h_3, h_4, r_2, r_3)$ ($r_1$ and $r_4$ are fixed), see Figures \ref{fig:rotating_disc_parametrization} and \ref{fig:rotating_disc_objectives}.
The task of an engineer is to find such geometrical shape of the disc that minimizes disc's weight and contact pressure $p_1$ between
the disc and the shaft while constraining the maximum radial stress $Sr_{max}$ to be less than some threshold.
The physical model of a rotating disc is described in \cite{armand1995structural} and it was adopted to the disc shape
presented in Figures \ref{fig:rotating_disc_parametrization}, \ref{fig:rotating_disc_objectives} in order to calculate the contact pressure $p_1$
and the maximum radial stress $Sr_{max}$.

\begin{figure}
  \center
  \includegraphics[width=0.4\textwidth]{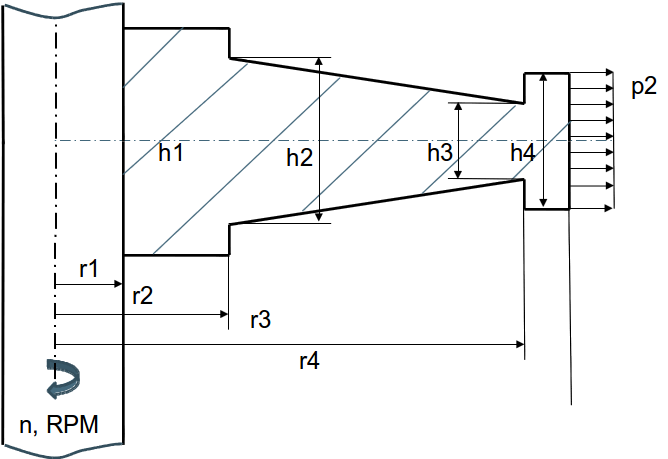}
  \caption{Rotating disc parametrization.}
  \label{fig:rotating_disc_parametrization}
\end{figure}

\begin{figure}
  \center
  \includegraphics[width=0.4\textwidth]{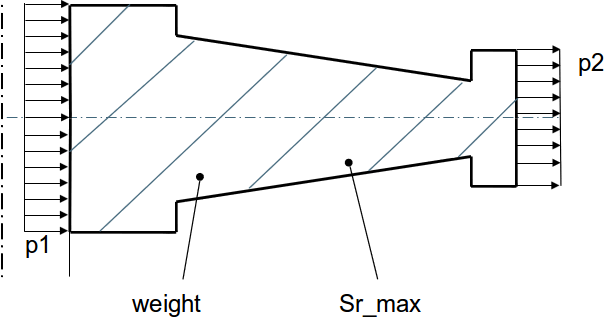}
  \caption{Rotating disc objectives.}
  \label{fig:rotating_disc_objectives}
\end{figure}

It is a common practice to build approximations of objective functions in order to analyze them
and perform optimization \cite{forrester2008surrogateModelling}.
So, we applied developed in this work tensorGP-reg algorithm and FITC to this problem.
The design of experiments was full factorial, number of points in each dimension was $[1, 8, 8, 3, 15, 5]$, i.e
$x_1$ was fixed.
Number of points in factors differ significantly and the generated data set is anisotropic.
The overall number of points in the training sample was 14 400.

Figures \ref{fig:rotating_disc_originalGP} and \ref{fig:rotating_disc_tensorGP} depict 2D slices of contact pressure
approximations along $x_5, x_6$ variables.
As you can see FITC model degenerates while tensorGP-reg provides smooth and accurate approximation.

\begin{figure}
  \includegraphics[width=0.5\textwidth]{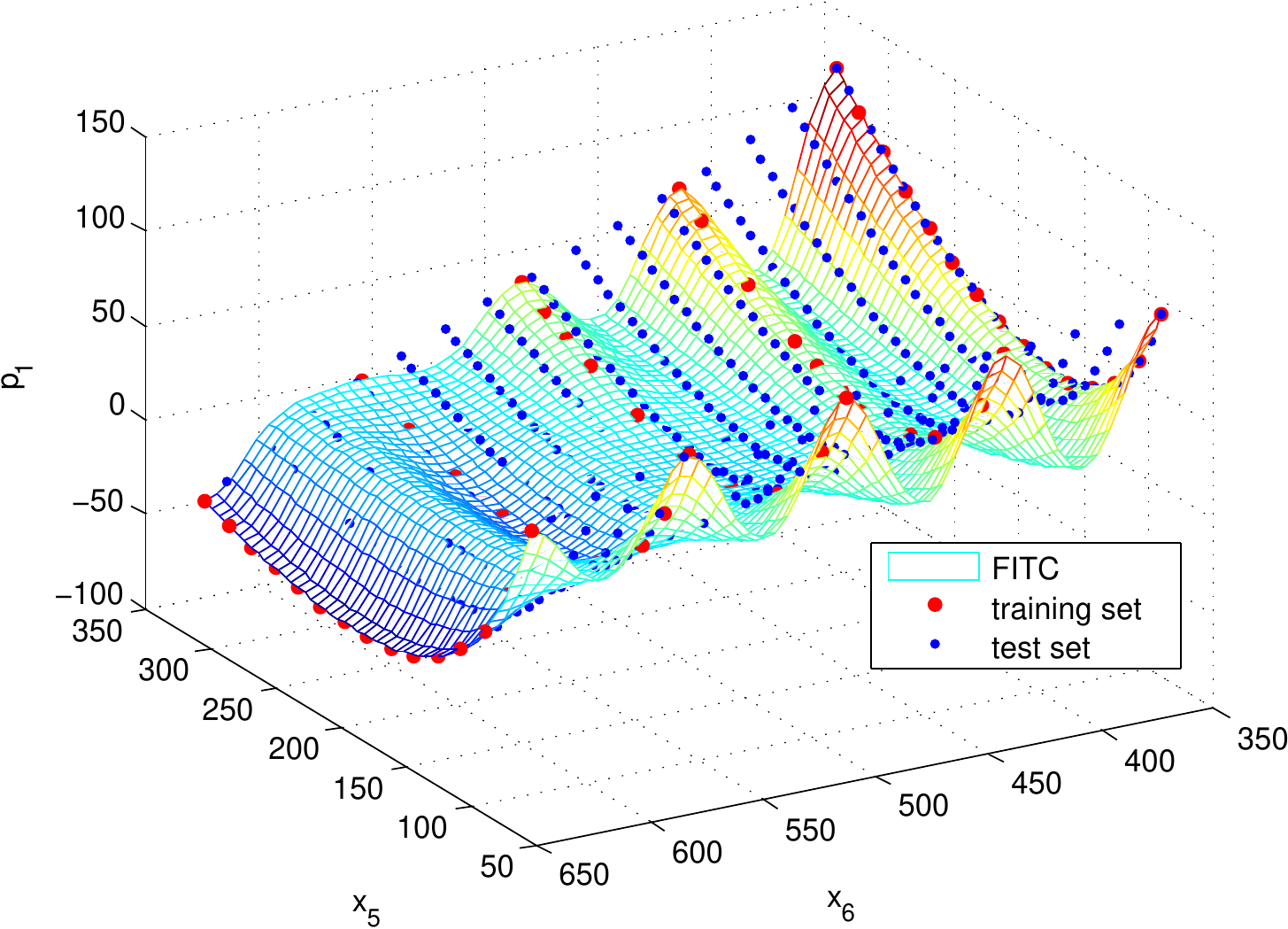}
  \caption{2D slice along $x_5$ and $x_6$ variables (other variables are fixed) of FITC approximation with 500 inducing inputs.
    It can be seen that the approximation degenerates.
  }
  \label{fig:rotating_disc_originalGP}
\end{figure}

\begin{figure}
  \includegraphics[width=0.5\textwidth]{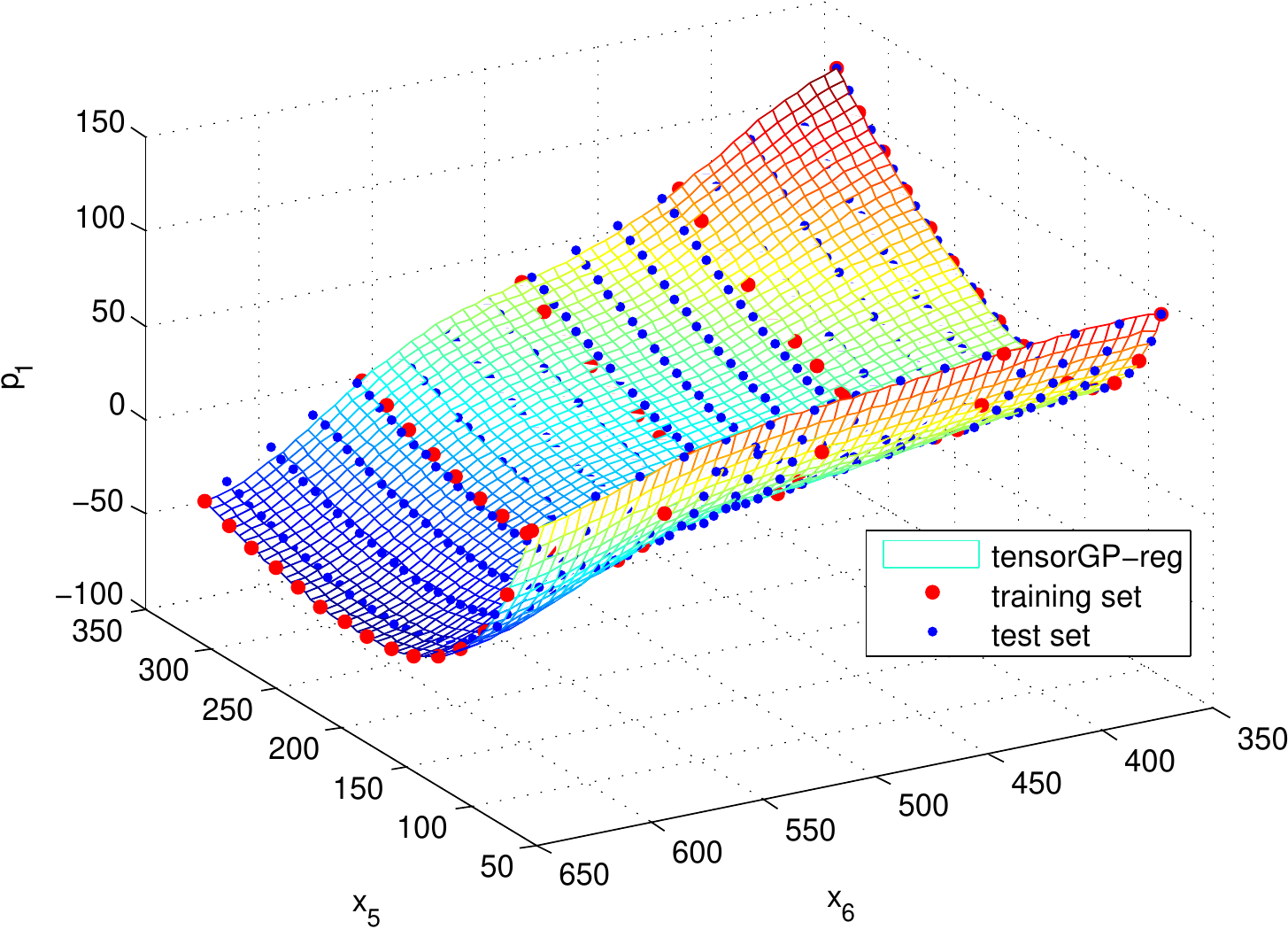}
  \caption{2D slice along $x_5$ and $x_6$ variables (other variables are fixed) of tensorGP-reg approximation.
  It can be seen that tensorGP-reg provides accurate approximation.}
  \label{fig:rotating_disc_tensorGP}
\end{figure}

\section{Conclusion}
Gaussian Processes are often used for building approximations
for small data sets.
However, knowledge about the structure of the given data set can
contain important information which allows us to efficiently
compute exact inference even for large data sets.

Introduced prior distribution combined with reasonable initialization has proven to be an efficient way to struggle
degeneracy in case of anisotropic data.

Algorithm proposed in this paper takes into account the special factorial structure of the data set and
is able to handle large multidimensional samples preserving power and flexibility
of GP regression.
Our approach has been successfully applied to toy and real problems including the rotating disc shape design.

\bibliography{tensorGP}
\bibliographystyle{icml2014}

\end{document}